\newcommand{\argorder}{\operatornamewithlimits{argorder}} 
\newcommand{\definedas}{\overset{\underset{\Delta}{}}{=}}
\newcommand{\req}{\overset{\underset{!}{}}{=}}
\newcommand{\E}{\text{E}}
\newcommand{\e}{\text{e}}
\newcommand{\Q}{{\rm Q}}
\newcommand{\dd}{\text{d}}
\newcommand{\U}{\text{U}}
\newcommand{\pr}{\text{\text{Pr}}}
\newcommand{\Sa}{\mathcal{S}_{\text{a}}}
\newcommand{\Sac}{\mathcal{S}_{\text{a}}^{\complement}}
\newcommand{\argmin}{\operatornamewithlimits{argmin}} 
\theoremstyle{plain}
\newtheorem{theorem}{Theorem}
\theoremstyle{remark}
\newtheorem{remark}{Remark}
\begin{document}

\title{Multi-user Scheduling Schemes for Simultaneous Wireless Information and Power Transfer}

\author{\IEEEauthorblockN{Rania Morsi$\sp\dagger$, Diomidis S. Michalopoulos$^*$, and Robert Schober$\sp\dagger$}
\IEEEauthorblockA{$\sp\dagger$ Institute of Digital Communications, Friedrich-Alexander-University Erlangen-N\"urnberg (FAU), Germany\\
$^*$ Department of Electrical and Computer Engineering, University of British Columbia, Canada}
}
\maketitle

\begin{abstract}
In this paper,  we study the downlink multi-user scheduling problem for a time-slotted system with simultaneous wireless information and power transfer. In particular, in each time slot, a single user is scheduled to receive information, while the remaining users opportunistically harvest the ambient radio frequency (RF) energy.  We devise novel scheduling schemes in which the tradeoff between the users' ergodic capacities and their average amount of harvested energy can be controlled. To this end, we modify two fair scheduling schemes used in information-only transfer systems. First, proportionally fair maximum normalized signal-to-noise ratio (N-SNR) scheduling is modified by scheduling the user having the $j^{\text{th}}$ ascendingly ordered (rather than the maximum) N-SNR.  We refer to this  scheme as order-based N-SNR scheduling. Second, conventional equal-throughput (ET) fair scheduling is modified by scheduling the user having the minimum moving average throughput among the set of users whose N-SNR orders fall into a certain set of allowed orders $\Sa$ (rather than the set of all users). We refer to this scheme as order-based ET scheduling. The feasibility conditions required for the users to achieve ET with this scheme are also derived. We show that the smaller the selection order $j$ for the order-based N-SNR scheme, and the lower the orders in $\Sa$ for the order-based ET scheme,  the higher the average amount of energy harvested by the users at the expense of a reduction in their ergodic capacities. We analyze the performance of the considered scheduling schemes for independent and non-identically distributed (i.n.d.) Ricean fading channels, and provide closed-form results for the special case of i.n.d. Rayleigh fading. 
\end{abstract}

\IEEEpeerreviewmaketitle

\section{Introduction}
With the tremendous growth of the number of battery-powered wireless communication devices, the idea of prolonging their lifetime by allowing them to harvest energy from the environment has drawn a lot of research interest. Wireless power transfer is particularly important for energy-constrained wireless networks such as sensor networks. For such networks, replacing batteries can be costly or even infeasible in scenarios where the sensors are deployed in difficult-to-access environments or embedded inside human bodies. However, common renewable energy resources such as solar and wind energy are weather dependent and can not be used indoors. On the other hand, harvesting energy from radio frequency (RF) signals is a viable green solution to supply energy wirelessly to low-power devices \cite{powercast}. Moreover, RF signals can transport information together with energy, which motivates the integration of RF energy harvesting in wireless communication systems \cite{Varshney2008,Shannon_meets_tesla_Grover2010,MIMO_Broadcasting_Zhang2011}. 

Simultaneous wireless information and power transfer (SWIPT) systems were first studied in \cite{Varshney2008,Shannon_meets_tesla_Grover2010}, where the authors show that there exists a fundamental tradeoff between information rate and power transfer. This tradeoff can be characterized by the boundary of the so-called rate-energy (R-E) region \cite{MIMO_Broadcasting_Zhang2011}. However, due to practical circuit constraints, a received signal used for information decoding (ID) can not be reused for energy harvesting (EH) \cite{MIMO_Broadcasting_Zhang2011}. One possible practical receiver architecture for SWIPT is the time switching receiver which switches in time between EH and ID \cite{MIMO_Broadcasting_Zhang2011}. Recently, multi-user systems employing SWIPT have been  gaining growing attention in an effort to make SWIPT suitable for practical networks \cite{MIMO_Broadcasting_Zhang2011,Multiuser_MISO_beamforming2013,Throughput_Maximization_for_WPCN_Zhang2013}. Multi-user multiple input single output SWIPT systems are studied in \cite{Multiuser_MISO_beamforming2013}, where the authors derive the optimal beamforming design that maximizes the total energy harvested by the EH receivers under signal-to-interference-plus-noise ratio constraints at the ID receivers. Moreover, \cite{Throughput_Maximization_for_WPCN_Zhang2013} considers a multi-user time-division-multiple-access system with energy and information transfer in the downlink and uplink channels, respectively. The authors derive the optimal downlink and uplink time allocation with the objective of achieving maximum sum-throughput or equal-throughput under a total time constraint.\\
\indent Multi-user scheduling schemes that exploit the independent and time-varying multipath fading of the users' channels to create multi-user diversity (MUD) have been extensively studied in information-only transfer systems \cite{Unified_Scheduling_approach,Performance_Analysis_MUD_Alouini_2006}. With such schemes, the user having the most favourable channel conditions is opportunistically scheduled to transmit/receive. Practical opportunistic schedulers aim at maximizing the users' capacities, while maintaining long-term fairness  among users with different channel conditions. For example, scheduling the user having the maximum instantaneous normalized signal-to-noise ratio (N-SNR) (normalized to its own average SNR) maximizes the users' capacities while ensuring that the capacity of each user is proportional to his channel quality. Another way to provide fairness is to guarantee equal throughput (ET) to all users by scheduling the user having the minimum moving average throughput \cite{equal_throughput_2009}. In SWIPT systems, however, the scheduling rules should be modified to additionally control the amount of energy harvested by the users. Multi-user scheduling schemes that exploit MUD and guarantee long-term fairness among users have not been considered in the context of SWIPT so far. Thus, in this paper, we devise a proportionally fair \emph{order-based} N-SNR scheduler, where the user having the $j^{\text{th}}$-ordered N-SNR is scheduled and an \emph{order-based} ET fair scheduler, where the user having the minimum moving average throughput is scheduled among the set of users whose N-SNR orders fall into a set of allowed orders $\Sa$. Our results  show that the design parameters $j$ and $\Sa$ can be used to control the R-E tradeoff of the order-based N-SNR and ET schedulers, respectively.

\section{Preliminaries}
In this section, we present the considered system model and the EH receiver model. We also briefly review round robin (RR) scheduling which serves as a performance benchmark for the proposed order-based N-SNR scheduling scheme.
\subsection{System Model}
\label{s:system_model}
Consider a time-slotted SWIPT system that consists of one access point (AP) with a fixed power supply and $N$ user terminals (UTs) that are battery-powered. The system is studied for downlink transmission, where it is assumed that the AP always has a dedicated packet to transmit for every user. We assume that the UT receivers are time switching \cite{MIMO_Broadcasting_Zhang2011}, i.e., in each time slot, each UT may either decode the information from the received signal or harvest energy from it. In each time slot, the AP schedules one user for information transmission, while the other idle users opportunistically harvest energy from the received signal, as shown in Fig. \ref{fig:ID_EH_system}. Furthermore, the AP and the UTs are equipped with a single antenna. 
\begin{figure}
\centering
\includegraphics[width=0.4\textwidth]{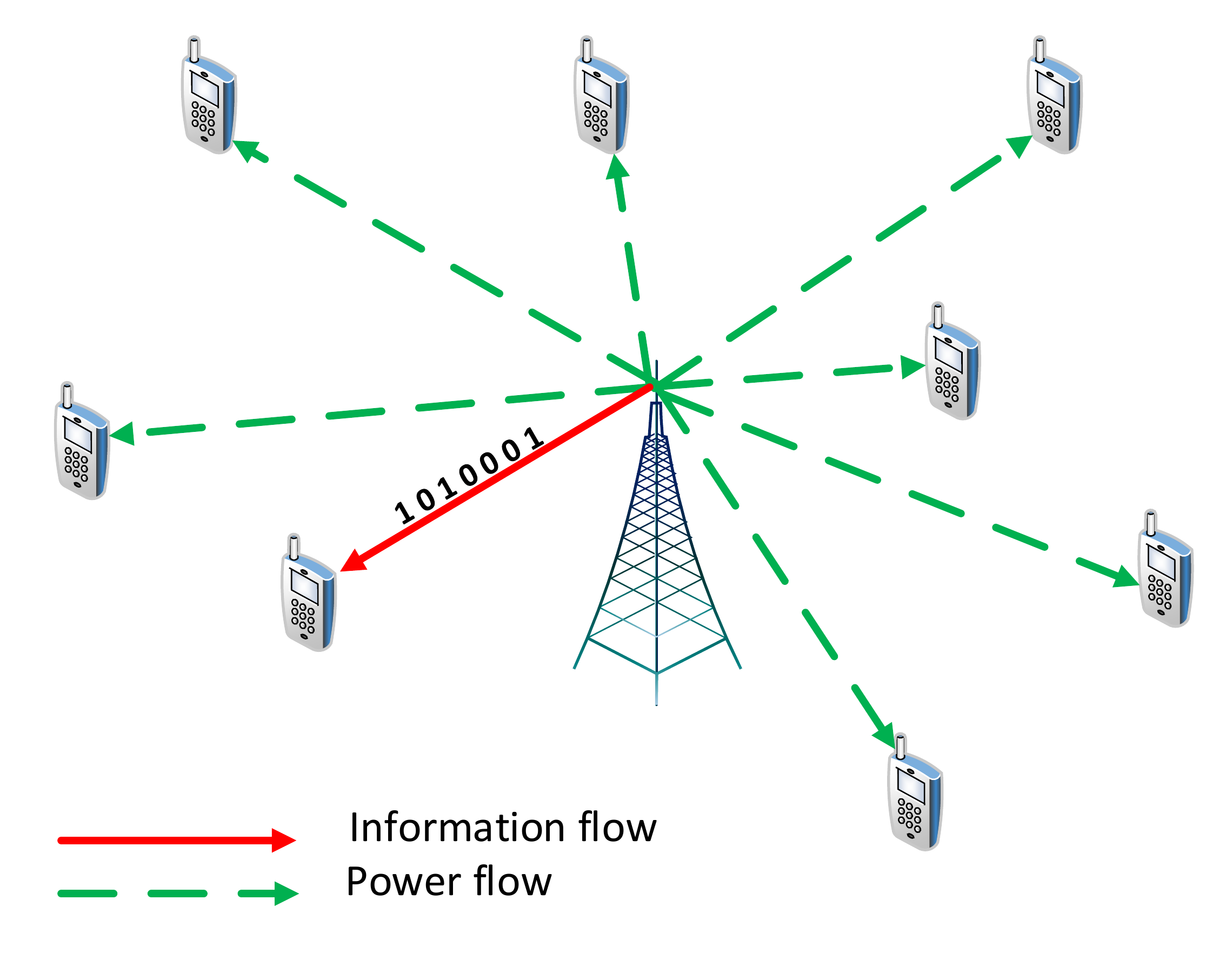}
\caption{Multi-user SWIPT system with time switching receivers. The scheduled user performs ID and the remaining users perform EH. }
\label{fig:ID_EH_system}
\end{figure}

At time slot $t$, the AP transmits an information signal to the scheduled user and the received signal at user $n$ is given by
\begin{equation}
r_n=\sqrt{P\,h_n}\e^{\text{j}\theta_n}x+z_n,
\label{eq:transmission}
\end{equation}
where $P$ is the constant transmit power of the AP, $x$ is the complex baseband information symbol whose average power is normalized to 1, i.e., $\E[|x|^2]\!\!=\!\!1$, where $\E[\cdot]$ denotes expectation, $\sqrt{h_n}$ and $\theta_n$ are respectively the amplitude and the phase of the fading coefficient of the channel from the AP to user $n$, and  $z_n$ is zero-mean complex additive white Gaussian noise with variance $\sigma^2$.\\
\indent The channels from the AP to the users are assumed to be block fading, i.e., the channel remains constant during one time slot, and varies independently from one slot to the next.
The  fading coefficients of the different links are assumed to be statistically independent.
Furthermore, the users' channels may have different mean channel power gains due to the path loss difference between users at different distances from the AP. We consider two channel models, namely Ricean and Rayleigh fading. The Ricean fading model is relevant for short-range RF EH applications, where a line of sight path may exist together with scattered paths between the transmitter and the receiver. In this case, $\sqrt{h_n}$ is Ricean distributed with Ricean factor $K_n$  and mean power gain $\Omega_n$, and the instantaneous channel power gain $h_n$ is non-central $\chi^2$-distributed with probability density function (pdf) \cite{Shannon_Capacity_fading_channels2005}%
\vspace{-0.2cm}
\begin{equation} 
f_{h_n}(x)=\frac{K_n+1}{\Omega_n} \e^{-K_n-\frac{(K_n+1)x}{\Omega_n}} \text{I}_0\left(2\sqrt{\frac{K_n(K_n+1)}{\Omega_n}x}\right),
 \label{eq:pdf_ncx2}
 \end{equation} 
where $\text{I}_0(\cdot)$ is the modified Bessel function of the $1^{\text{st}}$ kind and order zero. The corresponding cumulative distribution function (cdf) is \cite{Shannon_Capacity_fading_channels2005}%
\vspace{-0.3cm}
\begin{equation}
F_{h_n}(x)=1-\Q_1\left(\sqrt{2K_n},\sqrt{\frac{2(K_n+1)x}{\Omega_n}}\right),
 \label{eq:ind_Ricean_power_cdf}
 \end{equation}
where $Q_1(a,b)$ is the first-order Marcum Q-function given by $Q_1(a,b)=\int_b^\infty{x  e^{-\frac{(x^2+a^2)}{2}} \text{I}_{0}(ax) \dd x}$. If a direct link between the transmitter and the receiver does not exist, the Ricean factor is $K_n=0,\,\,\forall n\in\{1,\ldots,N\}$, and the distribution of the AP-UT channels reduces to Rayleigh fading. The resulting channel power gain of user $n$ is exponentially distributed with mean $\Omega_n$, pdf $f_{h_n}(x)=\lambda_n\e^{-\lambda_n x}$, where $\lambda_n=1/\Omega_n$, and cdf $F_{h_n}(x)=1-\e^{-\lambda_n x}$.
\subsection{Energy Harvesting Receiver Model}
We adopt the EH receiver model described in \cite{WIPT_Architecture_Rui_Zhang_2012}. In this model, the average harvested power (or equivalently energy, for a unit-length time slot) is given by \cite[Eq. (14)]{WIPT_Architecture_Rui_Zhang_2012}
\begin{equation}
EH=\eta h P,
\label{eq:EH}
\end{equation}
where $h$ is the channel power gain from the AP to the EH receiver and $\eta$ is the RF-to-DC conversion efficiency which ranges from 0 to 1. For currently commercially available RF energy harvesters, $\eta$ ranges from 0.5 to 0.7 \cite{powercast}.
\subsection{Round Robin (RR) Scheduling}
\label{ss:RR}
The RR scheduler grants the channels to the users in turn. Hence, no channel knowledge is needed at the AP. Therefore, a user receives information with probability $\frac{1}{N}$, and  harvests energy with probability $1-\frac{1}{N}$. Thus, user $n$ (denoted by $\U_n$) achieves an ergodic capacity $\E[C_{\U_n}]$ that is $\frac{1}{N}$ times the ergodic capacity $\E[C_{\text{U}_{n,\text{f}}}]$ that it would have achieved if it had \emph{full} time access to the channel. That is,  
\begin{equation}
\E[C_{\U_n}]\Big|_{\text{RR}}=\frac{1}{N} \E[C_{\text{U}_{n,\text{f}}}],
\label{eq:capacity_round_robin_nND_Ricean}
\end{equation}
where $\E[C_{\text{U}_{n,\text{f}}}]=\int_0^\infty{\log_2\left(1+P\frac{x}{\sigma^2}\right)f_{h_n}(x)\,\, \dd x}$ is given by \cite[Eq. (5)]{Shannon_Capacity_fading_channels2005} for Ricean fading. Note that we set the bandwidth in \cite[Eq. (5)]{Shannon_Capacity_fading_channels2005} to unity since $\E[C_{\text{U}_{n,\text{f}}}]$ is in bits/s/Hz. For Rayleigh fading,  the ergodic full-time-access capacity of user $n$ is $\E[C_{\text{U}_{n,\text{f}}}]=\frac{1}{\ln(2)}\e^{\frac{1}{\bar{\gamma}_n}}\,\E_1\left(\frac{1}{\bar{\gamma}_n}\right)$, where $\bar{\gamma}_n=\frac{P\Omega_n}{\sigma^2}$ is the average SNR of user $n$ and  $\E_1(x)$ is the exponential integral function of the first order defined as $\E_1(x)=\int_1^{\infty} \frac{\e^{-tx}}{t}\quad \dd t$.
The resulting average harvested energy of user $n$ for both Ricean and Rayleigh fading channels is 
\begin{equation}
\E\left[EH_{U_n}\right]=\left(1-1/N\right)\eta P \Omega_n.
\label{eq:EH_round_robin_IND_Ricean}
\end{equation}
\section{R-E tradeoff controllable Fair Scheduling Schemes For SWIPT Systems}
\label{s:Channel_aware_Scheduling_Schemes_For_Joint_information_and_Energy_Transfer}
Assuming full channel knowledge is available at the AP\footnote{The channel coefficients of the AP-UT links can be fed back from the users in the uplink in frequency division duplex systems  or can be assumed available in time division duplex systems due to channel reciprocity \cite{Unified_Scheduling_approach}.}, the scheduling decision can be made to control not only the capacity but also the amount of energy harvested by the users. In this section, we propose two fair scheduling algorithms in which the R-E tradeoff can be controlled.
\subsection{Order-based Normalized-SNR (N-SNR) Scheduling}
\label{s:order_based_NSNR_scheme}
In information-only transfer systems, the maximum N-SNR scheme schedules the user having the maximum instantaneous N-SNR, which  maximizes the users' capacities while maintaining proportional fairness among all users \cite{Performance_Analysis_MUD_Alouini_2006}. However, for SWIPT systems, such a scheduling rule leads to the minimum possible harvested energy by the users, since  the best states of the users' channels are exploited for ID rather than EH.  We propose to modify the maximum N-SNR scheme to an order-based scheme by scheduling the user having the $j^{\text{th}}$ ascendingly-ordered N-SNR, where the selection order $j$ is chosen from $\{1,\ldots,N\}$. If $j=N$, order-based N-SNR scheduling reduces to maximum N-SNR scheduling. 
\subsubsection{Order-based N-SNR Scheduling Algorithm}
Without loss of generality, ordering the N-SNRs in our model is equivalent to ordering the normalized channel power gains $\frac{h_n}{\Omega_n}$ since $P$ and $\sigma^2$ are identical for all users. Hence, the scheduling rule for the order-based N-SNR scheme reduces to
 \begin{equation}
n^*=\argorder\limits_{n\in\{1,\ldots,N\}} \frac{h_n}{\Omega_n}.
\label{eq:order_normalized_power_gain_rule}
\end{equation}
where we define \textquotedblleft$\argorder$" as the argument of the $j^{\text{th}}$ ascending order. 
\subsubsection{Performance Analysis} 
Next, we analyze the per user ergodic capacity and the per user average harvested energy of the considered scheduling scheme. For Ricean fading, the pdf of the metric to be ordered; the normalized channel power gain $X_n\!=\!\frac{h_n}{\Omega	_n}$, is given by (\ref{eq:pdf_ncx2}) but with unit mean (setting $\Omega_n\!=\!1$). Thus, $X_n,\,n=1,\ldots,N$, are independent and identically distributed (i.i.d.) if all users have the same Ricean factor, i.e., $K_n\!=\!K,$ $\forall n\!\in\!\{1,\ldots,N\}$. This is a realistic assumption since all users exist in the same physical environment. In this case, the pdf and cdf of the normalized Ricean fading channel power gains  $X_n,\,\forall n\!\in\!\{1,\ldots,N\}$ are $f_X(x)\!=\!(K+1) \e^{-K-(K+1)x} \text{I}_0\left(2\sqrt{K(K+1)x}\right)$ and $F_X(x)=1\!-\!\Q_1\left(\sqrt{2K},\sqrt{2(K+1)x}\right)$, respectively. The pdf of the $j^{\text{th}}$ ascendingly ordered random variable $X_{(j)}$ is given by \cite[eq. (2.1.6)]{Order_Statistics_David_Nagaraja}
\begin{equation}
f_{X_{(j)}}(x)\!=\!N\binom{N-1}{j-1}f_X(x)[F_X(x)]^{j-1}[1-F_X(x)]^{N-j}.
\label{eq:pdf_order_statistics}
\end{equation}
Thus, the ergodic capacity of user $n$ can be obtained as
\begin{equation}
\E\left[C_{j,\U_n}\right]=\frac{1}{N}\int\limits_0^\infty{\log_2\left(1+\bar{\gamma}_n x\right)f_{X_{(j)}}(x) \dd x},
\label{eq:Ergodic_capacity_Ui_IND_PFS}
\end{equation} 
where $\frac{1}{N}$ is the probability that the normalized channel of user $n$ has the $j^{\text{th}}$ order. The average harvested energy of user $n$ is  
\begin{align}
\E\left[EH_{j,\U_n}\right]&=\eta P \Omega_n\int\limits_{0}^{\infty}{x\left(f_{X}(x)- \frac{1}{N}f_{X_{(j)}}(x)\right) \dd x}\notag\\
&=\eta P \Omega_n\left[1-\E[X_{(j)}]/N\right],
\label{eq:avg_EH_NSNR}
\end{align}
where we used $f_X(x)=\frac{1}{N}\sum_{j=1}^N f_{X_{(j)}}(x)$. To the best of our knowledge, closed-form expressions for (\ref{eq:Ergodic_capacity_Ui_IND_PFS}) and (\ref{eq:avg_EH_NSNR}) do not exist for Ricean fading channels. Hence, we resort to numerical integration. For Rayleigh fading, we use (\ref{eq:pdf_order_statistics})--(\ref{eq:avg_EH_NSNR}) and the pdf and cdf of $X_n$, namely, $f_X(x)=\e^{-x}$ and $F_X(x)=1- \e^{-x}$, to obtain the per user ergodic capacity in closed-form as
\begin{align}
\E\left[C_{j,\U_n}\right]= \frac{\binom{N-1}{j-1}}{\ln(2)}\sum\limits_{l=0}^{j-1}&\frac{(-1)^l \binom{j-1}{l}}{(N-j+l+1)}\e^{\frac{(N-j+l+1)}{\bar{\gamma}_n}}\notag\\
&\E_1\left(\frac{N-j+l+1}{\bar{\gamma}_n}\right),
\label{eq:Capacity_NSNR_Rayleigh}
\end{align}
and the per user average harvested energy as
\begin{equation}
\E\left[EH_{j,\U_n}\right]=\eta P\Omega_n\left(1-\frac{1}{N}\sum\limits_{l=N-j+1}^{N}{\frac{1}{l}}\right).
\label{eq:EH_NSNR_Rayleigh}
\end{equation}
For the special case of $j=N$, the resulting average system capacity $\sum_{n=1}^N{\E\left[C_{j,\U_n}\right]}$ reduces to \cite[Eq. (44)]{Performance_Analysis_MUD_Alouini_2006}.
\subsection{Order-based ET Scheduling}
\label{ss:controllable_ET_scheduling}
Conventionally, ET fairness can be achieved by scheduling the user having the minimum moving average throughput \cite{equal_throughput_2009}. However, with such a scheduling rule, neither the resulting ET nor the amount of energy harvested by the users can be controlled. Hence, we devise a new scheme which trades the ET level for the amount of energy harvested by the users. 
\subsubsection{Order-based ET Scheduling Algorithm}
First, the users' instantaneous N-SNRs are ascendingly ordered, and then among the set of users whose N-SNR orders fall into the set of allowed orders $\Sa$, the AP schedules the one having the minimum moving average throughput. Thus, at time slot $t$, the scheduler selects user $n^*$ that satisfies 
 \begin{equation}
n^*=\argmin\limits_{O_n\in \Sa} r_n(t-1),
\label{eq:order_ET_controllable_rule}
\end{equation}
where $O_n\in\{1,\ldots,N\}$ is defined as the order of the instantaneous N-SNR of user $n$, and $r_n(t-1)$ is the throughput of user $n$ averaged over previous time slots up to slot $t-1$. The throughput of the users is updated recursively as
\begin{equation}
r_n(t)= \left\{ \begin{array}{ll}
(1-\beta)r_n(t-1)+\beta C_n(t) &\mbox{\hspace{-0.1cm}if user $n$ is scheduled} \\
(1-\beta)r_n(t-1) &\mbox{\hspace{-0.1cm}otherwise,}\\
\end{array} \right.
\label{eq:throughput_update}
\end{equation}
where $C_n(t)=\log_2\left(1+\frac{Ph_n(t)}{\sigma^2}\right)$ is the feasible rate of user $n$ in time slot $t$, and $\beta\in(0,1)$ is a smoothing factor selected to asymptotically vanish (e.g., $\beta=1/t$)\footnote{An asymptotically vanishing $\beta$ ensures convergence of the moving average throughput $r_n(t)$ to its ensemble average $\E[C_{\U_n}]$ since the considered fading process $h_n(t)$ is assumed to be stationary \cite{Unified_Scheduling_approach}.}. Confining the set $\Sa$ to low orders (e.g., $\Sa= \{1,\ldots,\lfloor\frac{N}{2}\rfloor$\}) leads potentially to a larger amount of harvested energy compared to conventional ET scheduling (which uses $\Sa= \{1,\ldots,N\}$), at the expense of a reduced ET. This is because a user from the set of low N-SNR users is scheduled for data reception and the users having relatively high N-SNRs are selected for EH. We note that depending on the choice of $\Sa$, ET scheduling may not always be feasible. This issue is investigated later in Theorem \ref{theo:feasibility_conditions} in detail.
\subsubsection{Performance Analysis}
Next, we analyze the ergodic capacity and the average harvested energy per user. The ergodic capacity of user $n$ can be formulated as
\begin{align}
\E[C_{\U_n}]&=\E[C_{\U_n}|O_n\in \Sa] \times \text{Pr}(O_n\in \Sa)\notag\\
&=\frac{\left|\Sa\right|}{N} \int\limits_{0}^\infty\log_2(1+\bar{\gamma}_n x) \left(\frac{1}{|\Sa|}\sum\limits_{j\in\Sa}f_{X_{(j)}}(x) \right) \dd x\notag\\
&\times \text{Pr}(\U_n|O_n\in\Sa)\notag\\
&=\sum\limits_{j\in\Sa}\E[C_{j,\U_n}]\Big|_{\text{N-SNR}}\text{Pr}(\U_n|O_n\in\Sa),
\end{align}
where $|\cdot|$ denotes the cardinality of a set, $\frac{1}{|\mathcal{S}_{\text{a}}|}f_{X_{(j)}}(x)$ is the likelihood function that the order of the normalized channel $x$ of user $n$ is $j$ given that $j\in\Sa$, $\text{Pr}(\U_n|O_n\in\Sa)$ is the probability that user $n$ is scheduled given that it has the chance to be scheduled, and $ \text{Pr}(O_n\in \Sa)=\frac{|\Sa|}{N}$ since the probability that a user's N-SNR takes any order from $\{1,\ldots,N\}$ is $1/N$. To write the average capacity of user $n$ in terms of its unconditioned probability of being scheduled $p_n \definedas\text{Pr}(n^*=n)$, we use
\begin{equation}
p_n=\text{Pr}(\U_n|O_n\in\Sa)\frac{|\Sa|}{N}.
\label{eq:pn_pn_conditioned}
\end{equation}
Hence, the average capacity of user $n$ reduces to
\begin{equation*}
\E[C_{\U_n}]=\frac{N}{|\Sa|}\sum\limits_{j\in\Sa}\E[C_{j,\U_n}]\Big|_{\text{N-SNR}}p_n\req r,\quad \forall n\in\{1,\ldots,N\},
\end{equation*}
where the average capacities of all users are forced to be equal to $r$, as required for ET transmission. This requires the scheduling probability of user $n$ to be 
\begin{equation}
p_n=\frac{r}{\frac{N}{|\Sa|}\sum\limits_{j\in\Sa}\E[C_{j,\U_n}]\Big|_{\text{N-SNR}}}.
\label{eq:pn_step1}
\end{equation}
Since $\sum_{n=1}^{N}{p_n}=1$ must hold, the resulting ET reduces to
\begin{equation}
r=\frac{1}{\frac{1}{N}\sum\limits_{n=1}^{N}{\frac{1}{\frac{1}{|\Sa|}\sum\limits_{j\in\Sa}\E[C_{j,\U_n}]\Big|_{\text{N-SNR}}}}}.
\label{eq:controllable_ET}
\end{equation}
In words, the equal throughput achieved by all users using the order-based ET scheme can be obtained by first determining the arithmetic mean of the order-based N-SNR capacities for the orders in $\Sa$ for each user, and then the harmonic mean of the resulting quantity for all users. This indicates that the user having the worst average channel will have a dominant effect on the resulting  ET. Using (\ref{eq:pn_step1}) and (\ref{eq:controllable_ET}), the set of scheduling probabilities required for all users to achieve ET reduces to 
\begin{equation}
p_n=\left(\sum\limits_{i=1}^{N}{\frac{\sum\limits_{j\in\Sa}\E[C_{j,\U_n}]\Big|_{\text{N-SNR}}}{\sum\limits_{j\in\Sa}\E[C_{j,\U_i}]\Big|_{\text{N-SNR}}}}\right)^{-1}, \forall n\in\{1,\ldots,N\}.
\label{eq:pn_controllable_ET}
\end{equation} 
As mentioned before, for certain combinations of $\Sa$ and $\Omega_n,\, n=1,\ldots,N$, the order-based ET scheduling algorithm may fail to provide all users with ET. In particular, the set of scheduling probabilities $p_n,\, n=1,\ldots,N$, in (\ref{eq:pn_controllable_ET}) required for the users to achieve ET may be infeasible. In the following theorem, we provide necessary and sufficient conditions for the ET-feasibility of the order-based ET scheduling algorithm. 
\begin{theorem}
The order-based ET scheduling is ET-feasible iff
\begin{align*}
p_n &\leq \frac{|\Sa|}{N}, &&\; \forall n\in\{1,\ldots,N\},\\
\sum\limits_{l=1}^{L} p_{k_{l}} &\leq \frac{\binom{N-1}{|\Sa|-1}L+\binom{L}{|\Sa|}(1-|\Sa|)}{\binom{N}{|\Sa|}},  &&\;\begin{aligned}&\forall (k_1,\ldots,k_{L})\in\mathcal{C}_{L},\\
&\forall L=|\Sa|,\ldots,N,\end{aligned}
\end{align*}
where $\mathcal{C}_{L}$ is the set of all $\binom{N}{L}$ combinations $(k_1,\ldots,k_{L})$ of $\{1,\ldots,N\}$.
\label{theo:feasibility_conditions}
\end{theorem}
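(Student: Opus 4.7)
My plan is to cast the ET-feasibility question as the feasibility of a bipartite transportation LP and apply Hall's marriage theorem (equivalently, max-flow min-cut). Since the normalized channel power gains $X_n = h_n/\Omega_n$ are i.i.d.\ with continuous distribution, the candidate set $C$ (users whose N-SNR orders lie in $\Sa$) is, in every time slot, a uniformly random $|\Sa|$-subset of $\{1,\ldots,N\}$, occurring with probability $1/\binom{N}{|\Sa|}$. Conditioned on $C$, exactly one user is scheduled, with some long-run conditional frequency $q(n\mid C)\in[0,1]$ satisfying $\sum_{n\in C}q(n\mid C)=1$. The target marginals (18) are realizable iff non-negative $q$'s exist with $p_n = \sum_{C\ni n} q(n\mid C)/\binom{N}{|\Sa|}$ for every $n$.

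For necessity, I would first specialize to $|S|=1$: only candidates can be scheduled, so $p_n \leq \Pr[n\in C] = |\Sa|/N$, which is the first inequality. For a general $L$-subset $S=\{k_1,\ldots,k_L\}$, summing the marginal identity yields
\begin{equation*}
\sum_{l=1}^{L} p_{k_l} \;=\; \frac{1}{\binom{N}{|\Sa|}}\sum_{C}\sum_{n\in S\cap C} q(n\mid C),
\end{equation*}
and the right-hand side is bounded by a double-counting argument: $\sum_{C} |S\cap C| = L\binom{N-1}{|\Sa|-1}$ counts the pairs $(n,C)$ with $n\in C\cap S$, while the $\binom{L}{|\Sa|}$ candidate sets $C\subseteq S$ contribute at most $1$ to the constraint but $|\Sa|$ to the pair-count, giving an excess of $(|\Sa|-1)\binom{L}{|\Sa|}$ to subtract, and yielding the stated bound for $L\geq|\Sa|$. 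Sizes $1<L<|\Sa|$ produce constraints implied by the first inequality.

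For sufficiency I would appeal to the converse of Hall / max-flow min-cut: whenever the cut conditions hold for every $S$, the transportation polytope is non-empty and admits a feasible $q^\star(n\mid C)$. Combined with the vanishing smoothing factor $\beta$ in (11) and the stationarity of $\{h_n(t)\}$, a standard stochastic-approximation argument shows that the moving-average minimum-throughput rule asymptotically realizes these marginals, so that ET at level $r$ is attained. The principal obstacle I anticipate is the precise combinatorial identification of the natural cut bound $|\{C:C\cap S\neq\emptyset\}|/\binom{N}{|\Sa|}$ with the specific expression $(L\binom{N-1}{|\Sa|-1}+(1-|\Sa|)\binom{L}{|\Sa|})/\binom{N}{|\Sa|}$: for $|\Sa|\geq 3$ this requires carefully handling intersections of size $2,\ldots,|\Sa|-1$, which suggests either an inclusion-exclusion bookkeeping or a tightness argument specific to the ET dynamics. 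A secondary subtlety is rigorously justifying the convergence of the throughput-averaging dynamics to an arbitrary Hall-feasible marginal rather than to the uniform tie-break solution, which likely needs a Lyapunov argument on the throughput deficits.
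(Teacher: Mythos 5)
Your necessity argument is, modulo the LP phrasing, essentially the paper's own: the appendix likewise decomposes $\pr(\U_n|O_n\in\Sa)$ over the $\binom{N-1}{|\Sa|-1}$ candidate sets containing user $n$, sums the resulting identities over the $L$ users of $S$, credits each candidate set $C\subseteq S$ with a total contribution of exactly $1$ instead of $|\Sa|$, and bounds every remaining conditional probability term individually by $1$ --- which is precisely your double-counting bound. The genuine gap is in sufficiency, and you have put your finger on it without resolving it. Gale's supply--demand (max-flow/min-cut) condition certifies feasibility of your transportation polytope exactly when $\sum_{n\in S}p_n\leq 1-\binom{N-L}{|\Sa|}/\binom{N}{|\Sa|}$ for every $L$-subset $S$, and this does \emph{not} coincide with the theorem's bound $\bigl(L\binom{N-1}{|\Sa|-1}+(1-|\Sa|)\binom{L}{|\Sa|}\bigr)/\binom{N}{|\Sa|}$ once $|\Sa|\geq 3$: for $N=6$, $|\Sa|=3$, $L=3$ the min-cut numerator is $\binom{6}{3}-\binom{3}{3}=19$ while the theorem's numerator is $3\binom{5}{2}-2\binom{3}{3}=28$. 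The slack arises exactly at the step you flag: a candidate set $C$ with $2\leq|S\cap C|<|\Sa|$ contributes at most $1$ to $\sum_{n\in S\cap C}q(n\mid C)$, not $|S\cap C|$. Consequently your Hall-type converse establishes sufficiency of a strictly \emph{stronger} system of inequalities than the one stated, and the plan as written cannot close the ``if'' direction for $|\Sa|\geq 3$ (it does close it for $|\Sa|\in\{1,2\}$ and for $|\Sa|=N$, where the two bounds agree).

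Be aware that the paper's appendix does not close this gap either: it only enumerates two failure modes and derives the stated inequalities as necessary, with no converse argued, so your LP framing is the more rigorous route and in fact exposes the looseness rather than contradicting the paper. To complete a proof you would need either to show that the specific $p_n$ generated by (18) can never satisfy the stated conditions while violating a tighter cut condition, or to replace the second condition by the exact cut condition $\sum_{n\in S}p_n\leq 1-\binom{N-L}{|\Sa|}/\binom{N}{|\Sa|}$. Your secondary concern --- that the minimum-moving-average-throughput dynamics with vanishing $\beta$ actually realize an arbitrary feasible set of marginals --- is likewise left unaddressed by the paper, which argues only at the level of required scheduling probabilities; a stochastic-approximation or Lyapunov argument would indeed be needed to make that step rigorous.
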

\begin{proof} Please refer to the Appendix.  \end{proof}
\begin{remark}
It can be verified that the conventional ET scheme, where $|\Sa|\!=\!N$, is always ET-feasible. Also, the second condition is always satisfied for $L\!=\!N$ as it reduces to $\sum_{n=1}^{N} p_n\leq 1$ which is satisfied with equality by definition. For the case when $|\Sa|\!=\!1$, the scheme reduces to the order-based N-SNR scheme discussed in Section \ref{s:order_based_NSNR_scheme} which always achieves proportional fairness but not ET fairness. In conclusion, by properly selecting $\Sa$ with $|\Sa|>1$, the ET --if feasible-- can be traded for the amount of energy harvested by the users. \end{remark}
\begin{remark}
In most practical scenarios, the order-based ET scheduling algorithm is ET-feasible. ET-infeasibility occurs when the mean channel power gains $\Omega_n$ of the users differ by many orders of magnitude. For example, a scenario with 4 users having Rayleigh fading channels with $\Omega_n=1,\,1,\,10^{-10},$ and $10^{-10}$ and with $\Sa=\{3,4\}$ is feasible as the resulting $p_n=\{0.0884,0.0884,0.4116,0.4116\}$ satisfy the conditions in Theorem \ref{theo:feasibility_conditions}. In contrast, the same scenario but with $\Omega_n=1,\,1,\,10^{-11},$ and $10^{-11}$ is infeasible since the required scheduling probability set $p_n=\{0.0603,0.0603,0.4397,0.4397\}$ does not satisfy the second feasibility condition in Theorem \ref{theo:feasibility_conditions} for $L=|\Sa|=2$.
\end{remark}

\indent Next we calculate the average amount of harvested energy per user. Defining $\Sac$ as the complement of set $\Sa$, the average harvested energy of user $n$ is given by
\begin{align*}
&\E[EH_{\U_n}]=\E[EH_{\U_n}|O_n\in \Sac] \times \text{Pr}(O_n\in \Sac)&\\
&\quad\quad\quad\quad\quad+\E[EH_{\U_n}|O_n\in \Sa] \times \text{Pr}(O_n\in \Sa)&\\
&=\int\limits_0^\infty \eta P \Omega_n x \frac{1}{|\Sac|}\sum\limits_{j\in\Sac}f_{X_{(j)}}(x) \dd x \times \frac{|\Sac|}{N}\\
&\quad+\int\limits_0^\infty \eta P \Omega_n x \frac{1}{|\Sa|}\sum\limits_{j\in\Sa}f_{X_{(j)}}(x) \left(1-\frac{p_n N}{|\Sa|}\right) \dd x\times\frac{|\Sa|}{N},
\end{align*}
where from (\ref{eq:pn_pn_conditioned}), $\frac{p_n N}{|\Sa|}$ is the conditional probability that user $n$ is scheduled given that $O_n\in\Sa$. After simple manipulations, $\E[EH_{\U_n}]$ reduces to 
\begin{align}
\E[EH_{\U_n}]&=\eta P \Omega_n \left[ \frac{1}{N}\sum\limits_{j=1}^N\E[X_{(j)}]- \frac{p_n}{|\Sa|}\sum\limits_{j\in\Sa}\E[X_{(j)}]\right]\notag\\
&=\eta P \Omega_n \left[1-\frac{p_n}{|\Sa|}\sum\limits_{j\in\Sa}\E[X_{(j)}]\right],
\label{eq:avg_EH_controllable_ET}
\end{align}
where the unit term is obtained using $\sum_{j=1}^{N} X_{(j)}\!=\!\sum_{n=1}^{N} X_n$, thus $\sum_{j=1}^{N} \E[X_{(j)}]\!=\!\sum_{n=1}^{N} \E[X_n]\!=\!N$, since the normalized channels $X_n$ are unit-mean random variables $\forall n\!=\!1,\!\ldots\!,N$. The average user capacities and harvested energies for the order-based ET scheme in (\ref{eq:controllable_ET}) and (\ref{eq:avg_EH_controllable_ET}) can be obtained in closed-form if their order-based N-SNR counterparts in (\ref{eq:Ergodic_capacity_Ui_IND_PFS}) and (\ref{eq:avg_EH_NSNR}) are in closed-form, which is the case for e.g., Rayleigh fading channels.
\section{Simulation Results} 
\label{s:simulation_results}
\begin{figure}[!tp]
\centering
\includegraphics[ width=0.46\textwidth,trim= 0 0 0 0.6cm, clip]{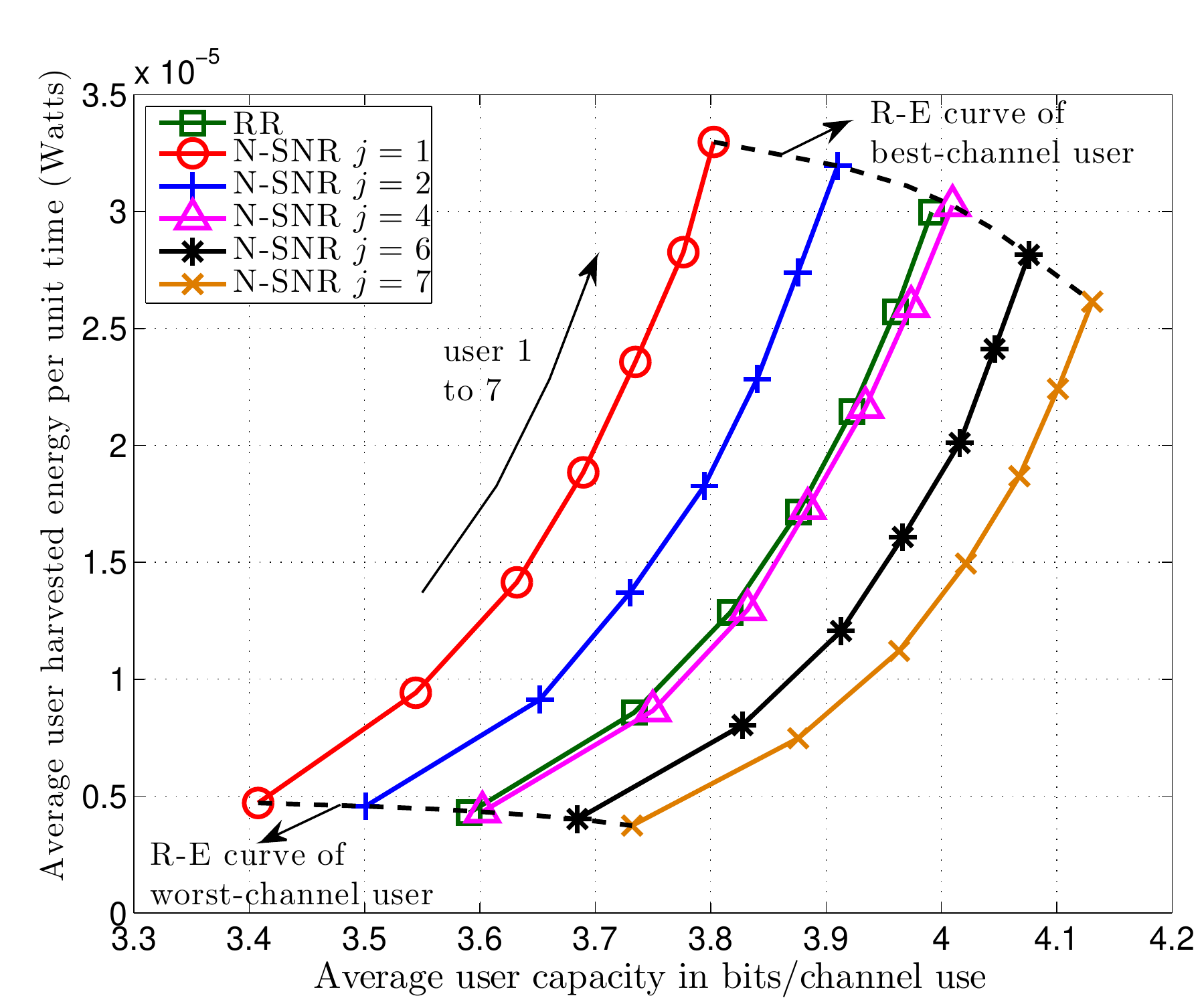} 
\caption{Rate-energy curves of the order-based N-SNR and the RR schemes for i.n.d. Ricean fading channels with $K=6$.}
\label{fig:Publication_curve_Rician_NSNR_ICC}
\end{figure}
\begin{figure}[!tp]
\centering 
\includegraphics[ width=0.5\textwidth,trim= 0 0 0 0.5cm, clip]{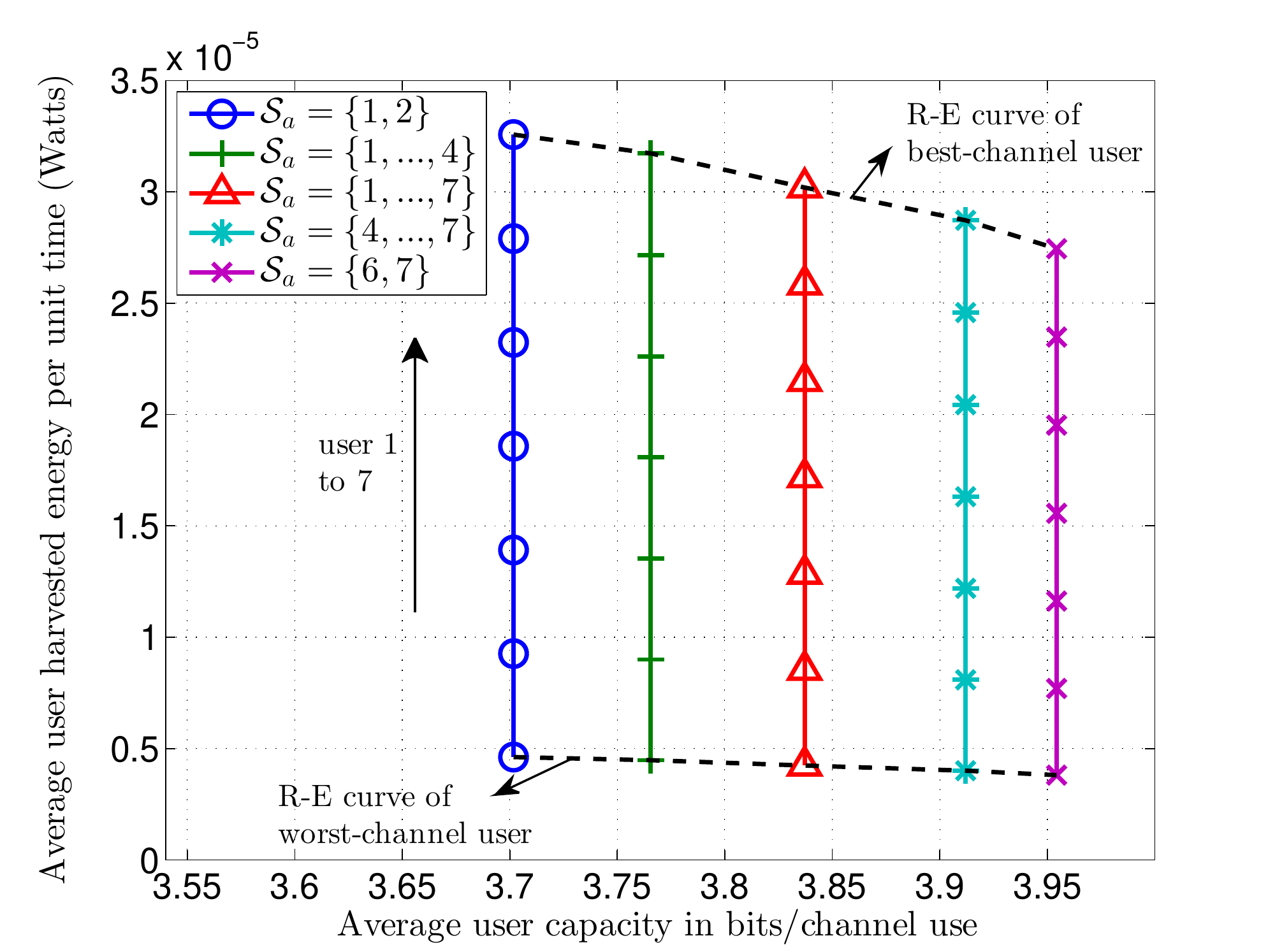} 
\caption{Rate-energy curves of the order-based ET scheduling scheme for i.n.d. Ricean fading channels with $K=6$.}
\label{fig:Publication_curve_Rician_ET_ICC}
\end{figure}
The investigated scheduling schemes have been simulated for an indoor environment with $N=7$ users operating in the ISM band at a center frequency of $\unit[915]{MHz}$, a bandwidth of $\unit[26]{MHz}$, and a noise power of $\sigma^2\!=\!\unit[-96]{dBm}$. We adopt the indoor path loss model in \cite{PathLoss_indoor_Rappaport_1992} for the case when the AP and the UTs are on the same floor (i.e., a path loss exponent of 2.76 is used, c.f. \cite[TABLE I]{PathLoss_indoor_Rappaport_1992}) and assume the AP-UT channels to be i.n.d. Ricean fading with Ricean factor $K\!=\!6$. We assume an AP transmit power of $P\!=\!\unit[1]{W}$, an antenna gain of $\unit[10]{dBi}$ at the AP and $\unit[2]{dBi}$ at the UTs, and an RF-to-DC conversion efficiency of $\eta\!=\!0.5$. The users' mean channel power gains are $\Omega_n\!=\!n\times10^{-5}$, which corresponds to an AP-UT  distance range of $\unit[2.27]{m}$ to $\unit[4.6]{m}$. The simulated results perfectly agree with the analytical results obtained based on the formulas reported in Section \ref{s:Channel_aware_Scheduling_Schemes_For_Joint_information_and_Energy_Transfer} for the considered scheduling schemes. Hence, only analytical results are provided in Figs. \ref{fig:Publication_curve_Rician_NSNR_ICC} and \ref{fig:Publication_curve_Rician_ET_ICC} for clarity of presentation . In Figs. \ref{fig:Publication_curve_Rician_NSNR_ICC} and \ref{fig:Publication_curve_Rician_ET_ICC},  every curve represents the ergodic capacity and average harvested energy of all users for a specific scheduling scheme. Also, the R-E curves of the worst and the best-channel users are highlighted. \\
\indent Fig. \ref{fig:Publication_curve_Rician_NSNR_ICC} shows the performance of the order-based N-SNR and the RR scheduling schemes. It is observed that both schemes achieve proportional fairness in terms of both the ergodic capacity and the average amount of harvested energy, since all users are on average scheduled the same number of times. Furthermore, RR scheduling performs  in-between the order-based N-SNR curves. This result was expected since the RR scheme is neither biased towards power transfer nor towards information transfer. Moreover, by reducing $j$ from $N$ to $1$, the order-based N-SNR scheduling allows the users to harvest more energy at the expense of reducing their ergodic capacities. For example, for the best-channel user reducing $j$ from $N$ to $1$ leads to a $7.94\%$ reduction in capacity and a $26.1\%$ increase in harvested energy.\\
\indent Fig. \ref{fig:Publication_curve_Rician_ET_ICC} shows the performance of the order-based ET scheduling scheme, which is feasible for all considered sets $\Sa$. The scheme yields ET for all users and an average harvested energy which is proportional to the users' channel conditions. It is observed that for the same $|\Sa|$, the higher the allowed orders in $\Sa$, the higher the ET at the expense of a reduction in harvested energy for all users. Hence, sets $\Sa\!=\!\{1,2\}$ and $\Sa\!=\!\{N-1,N\}$ provide the extreme ranges of such a tradeoff. In particular, going from $\Sa\!=\!\{6,7\}$ to $\Sa\!=\!\{1,2\}$ leads to an increase of $18.6\%$ and $21\%$ in the amount of harvested energy of the best and the worst channel users, respectively, at the expense of only a $6.33\%$ reduction in the ET.
\begin{remark}
We note that for lower data rates (capacities), the clock frequency and the supply voltage of the UT circuits can be scaled down (a technique known as dynamic voltage scaling). This leads to a cubic reduction in power consumption because dynamic power dissipation depends on the square of the supply voltage and linearly on the frequency ($P_c\propto V^2 f$) \cite{wang2009electronic}. Hence, when the users can tolerate low data rate, the selection order $j$ of the order-based N-SNR scheduling or the orders in $\Sa$ for the order-based ET scheduling can be chosen small to allow the users to harvest more RF energy and simultaneously reduce their power consumption. 
\end{remark}
\section{Conclusion}
\label{s:conclusion} 
This paper focused on modifying the scheduling objectives of energy-constrained multi-user communication systems by including the RF harvested energy as a performance measure. We presented novel order-based scheduling algorithms which provide proportional fairness/ET fairness and allow the control of the R-E tradeoff. We applied order statistics theory to analyze the per user ergodic capacity and average harvested energy for the considered schemes. Our results reveal that a smaller selection order for the order-based N-SNR scheme and lower orders in the selection set $\Sa$ for the order-based ET scheme result in a higher average harvested energy for all users at the expense of reduced average capacities.
\appendix
\label{app:feasibility}
The order-based ET scheduling may fail to provide all users with ET for one of the following reasons:
\begin{enumerate}
	\item Some user $n$ is required to be scheduled more often than possible. That is $\exists n: p_n>\frac{|\Sa|}{N} \overset{\text{from (\ref{eq:pn_pn_conditioned})}}{\equiv} \pr(\U_n|O_n\in\Sa)>1$. Thus, the first feasibility condition follows.
	\item For certain combinations of users in $\Sa$, the sum of the required probabilities that one of them accesses the channel exceeds one. That is, $\exists$ a combination $(k_1,\ldots,k_{|\Sa|})$ in $\{1,\ldots,N\}$, where $\sum_{l=1}^{|\Sa|}\pr(\U_{k_l}|\Sa=\{O_{k_1},\ldots,O_{k_{|\Sa|}}\})>1$.
\end{enumerate}
To find a simple condition for the second case, we first synthesize $\text{Pr}(\U_n|O_n\in\Sa)$ using the law of total probability as 
\small
\begin{equation*}
\text{Pr}(\U_n|O_n\!\in\!\Sa)\!=\!\frac{1}{\binom{N-1}{|\Sa|-1}}\sum\limits_{\mathcal{C}'_n}\pr(\U_n|\Sa\!=\!\{O_n,O_{i_1}\!,\!\ldots\!,O_{i_{|\Sa|-1}\!}\}), 
\end{equation*}
\normalsize
where $\mathcal{C}'_n$ is the set of all $\binom{N-1}{|\Sa|-1}$ combinations $(i_1,\ldots,i_{|\Sa|-1})$ from $\{1,\ldots\!,n\!-\!1\!,n\!+\!1,N\!\}$. Thus, from (\ref{eq:pn_pn_conditioned}),
\begin{equation}
\binom{N}{|\Sa|}p_n=\sum\limits_{\mathcal{C}'_n}\pr(\U_n|\Sa=\{O_n,O_{i_1},\ldots,O_{i_{|\Sa|-1}}\})
\label{eq:p_n_synthesized}
\end{equation}
holds $ \forall\, n\in\{1,\ldots,N\}$. In order to check that 
\begin{equation}
\sum\limits_{l=1}^{|\Sa|}\pr(\U_{k_l}|\Sa=\{O_{k_1},\ldots,O_{k_{|\Sa|}}\})=1
\label{eq:conditional_prob_must_1}
\end{equation}
holds $\forall$ combinations $(k_1,\ldots,k_{|\Sa|})$  drawn from  $\{1,\ldots,N\}$, we observe that adding $|\Sa|$ equations of (\ref{eq:p_n_synthesized}) for the users with indices $(k_1,\ldots,k_{|\Sa|})$ results in 
\begin{equation*}\binom{N}{|\Sa|}\sum\limits_{l=1}^{|\Sa|}p_{k_l}=\sum\limits_{l=1}^{|\Sa|}\pr(\U_{k_l}|\Sa=\{O_{k_1},\ldots,O_{k_{|\Sa|}}\})+\ldots.
\end{equation*}
Hence, applying (\ref{eq:conditional_prob_must_1}) and limiting every remaining probability term to $1$, the whole summation will be limited to \small
\begin{equation*}
\binom{N}{|\Sa|}\sum\limits_{l=1}^{|\Sa|} p_{k_{l}} \leq \hspace{-0.3cm}\underbrace{\binom{N-1}{|\Sa|-1}}_{\begin{aligned}&\text{total number of}\\[-0.5em] & \text{probabilities per}\\[-0.5em] & \text{eq. in (\ref{eq:p_n_synthesized}})\end{aligned}}\hspace{0.1cm}\underbrace{|\Sa|}_{\begin{aligned}&\text{number of} \\[-0.5em] &\text{eqs. added}\end{aligned}}+\underbrace{(1-|\Sa|)}_{\begin{aligned}&\text{replacing the number} \\[-0.5em] &\text{of probability terms}\\[-0.5em] &\text{that add to 1 by 1}\end{aligned}}.
\label{eq:feasibility_condition_Sa}
\end{equation*}\normalsize
Moreover, adding $L>|\Sa|$ equations of (\ref{eq:p_n_synthesized}) for the users with indices $(k_1,\ldots,k_L)$, and applying (\ref{eq:conditional_prob_must_1}) for every $|\Sa|$-length combination from the set $\{k_1,\ldots,k_L\}$, the whole summation will be limited to
\begin{equation*}
\binom{N}{|\Sa|}\sum\limits_{l=1}^{L} p_{k_{l}} \leq \binom{N-1}{|\Sa|-1}L+\binom{L}{|\Sa|}(1-|\Sa|).
\end{equation*}
which must hold for every combination $(k_1,\ldots,k_L)$ in $\{1,\ldots,N\}$ and for every $L=|\Sa|,\ldots,N$. Hence, the second feasibility condition follows.
\bibliographystyle{IEEEtran}
\bibliography{references}
\end{document}